\newtheorem{lemma}{Lemma}
\newtheorem{definition}{Definition}
\newtheorem{theorem}{Theorem}
\newcommand{\tr}{{\rm tr }}
\newcommand{\bra}{\langle}
\newcommand{\ket}{\rangle}
\newcommand{\vp}{\varphi}
\newcommand{\N}{\mathbb{N}}
\newcommand{\R}{\mathbb{R}}
\newcommand{\be}{\begin{equation}}
\newcommand{\eeq}{\end{equation}}
\newcommand{\bet}{\begin{equation*}}
\newcommand{\eeqt}{\end{equation*}}
\newcommand{\bea}{\begin{eqnarray}}
\newcommand{\eeqa}{\end{eqnarray}}
\newcommand{\beat}{\begin{eqnarray*}}
\newcommand{\eeqat}{\end{eqnarray*}}
\newcommand{\goesto}{\rightarrow}
\newcommand{\h}[1]{\mathcal{#1}}
\newcommand{\hil}{\mathcal{H}}
\newcommand{\cc}[1]{\overline{#1}}
\begin{document}
\title{A proof for the informational completeness of the rotated quadrature observables}
\author{J. Kiukas}
\address{Jukka Kiukas,
Department of Physics, University of Turku,
FIN-20014 Turku, Finland}
\email{jukka.kiukas@utu.fi}
\author{P. Lahti}
\address{Pekka Lahti,
Department of Physics, University of Turku,
FIN-20014 Turku, Finland}
\email{pekka.lahti@utu.fi}
\author{J.-P. Pellonpää}
\address{Juha-Pekka Pellonpää,
Department of Physics, University of Turku,
FIN-20014 Turku, Finland}
\email{juhpello@utu.fi}

\begin{abstract}
We give a new mathematically rigorous proof for the fact that,
when $S$ is a dense subset of $[0,2\pi)$,
the rotated quadrature operators $Q_\theta$,
$\theta\in S$,  of a single mode electromagnetic field constitute an informationally complete
set of observables.
\end{abstract}

\maketitle
\section{Introduction}
Since the pioneering works of Vogel and Risken \cite{VR1989} and Smithey {\em et al} \cite{Smithey1993}
the measurement of the rotated quadratures $Q_\theta$, $\theta\in[0,2\pi)$, has formed one of the major tools in the quantum state tomography,
see, e.g., the compilation \cite{kirja}. The basic idea behind the state reconstruction is well known: 
the inverse Radon tranform of the quadrature
measurement statistics allows one to reconstruct the Wigner function of the state in question and the Wigner function separates
states, so that the statistics uniquely determines the state. We do not question the validity of this argument. However, the existing
literature, which we are aware of, does not give a full justification that this procedure actually applies to all possible
states of a quantum system. Therefore, in this paper, we wish to give a direct proof of the fact that the set of quadrature
observables $Q_\theta$, $\theta\in S$, $S\subseteq[0,2\pi)$ is dense, is informationally complete, that is, the measurement outcome statistics 
$p^{Q_\theta}_T$
of these observables uniquely
determine the state $T$.

There is a beautiful group theoretical proof of the informational completeness of the observables $Q_\theta$, $\theta\in[0,2\pi)$
\cite{Gianni2000}. This proof builds on a general method of constructing informationally complete sets of observables 
using the theory of square-integrable representations of unimodular Lie groups. The result in question is then obtained as a
special application of this  theory to the Weyl-Heisenberg group. 
Due to the practical importance of the result, we give an alternative direct proof of it. The proof of this fact in  Section~\ref{proof}  forms the main body of this 
paper.

In the final Section~\ref{Phasespace} we comment on the measurability of the quadrature observables and we compare the tomography
based on their measurements on the one obtained from the phase space measurements.

\section{Basic notations and definitions}

Let $\hil$ be a complex separable Hilbert space, and $L(\hil)$ the set of bounded
operators on $\hil$, and $\h T(\hil)$ the set of trace class operators. We let $\|\cdot\|_1$
denote the trace norm of $\h T(\hil)$. (The operator norm of $L(\hil)$ will be denoted
by $\|\cdot\|$.) When $\hil$ is associated with a
quantum system (such as the single mode electromagnetic field), the states of the system are
being represented by positive operators $T\in \h T(\hil)$ with unit trace, density operators,
and the observables are associated with the normalized positive operator measures
defined on the Borel $\sigma$-algebra $\h B(\R)$ of the real line.\footnote{Normalized positive
operator measure is a map $E:\h B(\R)\to L(\hil)$ 
which is $\sigma$-additive in the weak operator topology, and has the property $E(\R)=I$, 
that is, for which $X\mapsto\tr[TE(X)]$ is a probability measure for each positive trace one operator $T$.}
Among them 
are the conventional von Neumann type of observables,
that is, projection valued measures $P:\h B(\R)\to L(\hil)$, or, equivalently, selfadjoint
operators in $\hil$.

The measurement outcome statistics of an observable $E:\h B(\R) \to L(\hil)$
in a state $T$ is given by the probability measure $X\mapsto \tr[TE(X)]=:p^E_T(X)$.

\begin{definition} A set $\h M$ of observables $E:\h B(\R)\to L(\hil)$ is
\emph{informationally complete}, if any two states $S$ and $T$ are equal
whenever $\tr[TE(X)] = \tr[SE(X)]$ for all $E\in \h M$ and $X\in \h B(\R)$.
\end{definition}

Thus, the informational completeness of a set $\h M$ of observables means that the totality of
the measurement outcome distributions $p^E_T$,  $E\in \h M$, determines
the state $T$ of the system. Clearly, a set $\h M$ of observables is informationally complete
if and only if $T=0$ whenever $T$ is a selfadjoint trace class operator with $\tr[TE(X)]=0$
for all $E\in \h M$ and $X\in \h B(\R)$. We will use this characterization in our proof.

Fix $\{|n\ket\mid n\in \N\}$ to be an orthonormal basis of $\hil$. (This is identified
with the photon number basis in the case where $\hil$ is associated with the single mode
electromagnetic field.) Here $\N = \{0,1,2,\ldots\}$. We will, without explicit indication, use the
coordinate representation, in which $\hil$ is represented as $L^2(\R)$ via the
unitary map $\hil\ni |n\ket\mapsto h_n\in L^2(\R)$, where $h_n$ is the $n$th Hermite function,
$$
h_n(x) = \frac{1}{\sqrt{2^nn!\sqrt{\pi}}}H_n(x)e^{-\frac 12 x^2},
$$
and $H_n$ is the $n$th Hermite polynomial.\footnote{Hermite polynomials are, of course, given
by the following recursion relation: $H_0(x)=1$, $H_1(x)=2x$ and $H_{n+1}(x)= 2xH_n(x)-2nH_{n-1}(x)$.}
Let $a$ and $a^*$ denote the usual raising and lowering operators associated with the above
basis of $\hil$; they are consider as being defined on their maximal domain
$$D(a)=D(a^*)= \Big\{\vp\in \hil \Big| \sum_{n=0}^\infty n|\bra \vp|n\ket|^2<\infty\Big\}.$$
Then define the operators $Q=\frac{1}{\sqrt{2}}\cc{(a^*+a)}$ and
$P=\frac{i}{\sqrt{2}}\cc{(a^*-a)}$, which, in the coordinate
representation are the usual multiplication and differentiation operators, respectively:
$(Q\psi)(x) = x\psi(x)$, $(P\psi)(x)= -i\frac{d\psi}{dx}(x)$.
(Here the bar stands for the closure of an
operator, so that e.g. $Q$ is the unique selfadjoint extension of the
essentially selfadjoint symmetric operator $\frac 1 {\sqrt 2}(a^*+a)$; see
\cite[Chapter IV]{Putnam} or \cite[Chapter 12]{Birman}
for details concerning the domains of these extensively studied
operators.) In the case of the electromagnetic field, $Q$ and $P$ are called the
quadrature amplitude operators of the field. In addition, $a=\frac{1}{\sqrt{2}}(Q+iP)$ and
$a^*=\frac{1}{\sqrt 2}(Q-iP)$ (see e.g.\ \cite[p.\ 73]{Putnam}). The Schwartz space $\h S(\R)$ of 
rapidly decreasing $C^\infty$-functions is included in $D(Q)\cap D(P)=D(a)$.

For a function $g:\R\to \R$, we let $g^{(k)}$ denote the $k$th derivative of $g$ (with $g^{(0)}=g$),
provided it exists. We need the following elementary commutation relations, which hold whenever
$\vp\in \h S(\R)$, and $g:\R\to \R$ is continuously differentiable and bounded:
\begin{eqnarray}
(g(Q)a^*-a^*g(Q))\vp &=& \tfrac{1}{\sqrt{2}}g^{(1)}(Q)\vp,\label{commutatorI}\\
(g(Q)a-ag(Q))\vp &=&-\tfrac{1}{\sqrt{2}}g^{(1)}(Q)\vp.\label{commutatorII}
\end{eqnarray}

Let $N$ denote the operator $a^*a$. It is selfadjoint on its natural domain
$$D(N)=\Big\{\vp\in \hil\Big|\sum_{n=0}^\infty n^2|\bra\vp|n\ket|^2<\infty\Big\}.$$
Now the phase shifting unitary operators are $e^{i\theta N}$, and we can define the
\emph{rotated quadrature observables} $Q_\theta$ by
$$
Q_\theta = e^{i\theta N}Qe^{-i\theta N}, \ \theta\in [0,2\pi).
$$
The spectral measure of $Q_\theta$ will be denoted by $P^{Q_\theta}:\h B(\R)\to L(\hil)$.

\section{The proof}\label{proof}

We need the so called \emph{Dawson's integral}
$$
{\rm daw}(x) = e^{-x^2}\int_0^xe^{t^2}dt
$$
(see e.g.\ \cite[pp.\ 298-299]{Abramowitz} or \cite[Chapter 42]{Atlas}).
The following lemma lists those properties of Dawson's integral that we are going to use.
Since the Dawson's integral has been studied extensively, they are probably well known.
However, as we were unable to find these results directly stated in the literature, we give
a proof in the Appendix; a reader familiar with the results may skip that proof.

\begin{lemma}\label{dawson}
\begin{itemize}
\item[(a)] ${\rm daw}:\R\to \R$ is a $C^\infty$-function, and
$$\lim_{x\goesto\pm\infty}{\rm daw}^{(k)}(x)=0 \ \ \text{for all } k\in \N.$$
\item[(b)]
$$
{\rm daw}^{(1)}(x) = \frac 12\sum_{n=0}^\infty \frac{(-1)^n n!}{2^n (2n)!}H_{2n}(x) \  \ \text{for all } x\in \R,
$$
where $H_n$ is the $n$th Hermite polynomial.
\end{itemize}
\end{lemma}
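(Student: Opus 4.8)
The plan is to base the whole proof on the integral representation ${\rm daw}(x) = \int_0^\infty e^{-t^2}\sin(2xt)\,dt$. First I would note that ${\rm daw}$ is $C^\infty$, since $x\mapsto\int_0^x e^{t^2}\,dt$ is smooth by the fundamental theorem of calculus and ${\rm daw}$ is its product with the smooth function $e^{-x^2}$; differentiating the definition then gives the first-order linear equation ${\rm daw}^{(1)}(x) = 1 - 2x\,{\rm daw}(x)$ with ${\rm daw}(0)=0$. To justify the representation I would set $J(x):=\int_0^\infty e^{-t^2}\sin(2xt)\,dt$, differentiate under the integral and integrate by parts to obtain $J^{(1)}(x)=\int_0^\infty 2t\,e^{-t^2}\cos(2xt)\,dt = 1 - 2xJ(x)$, while $J(0)=0$ is immediate. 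Since a linear first-order initial value problem has a unique solution, this forces $J={\rm daw}$.

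With the representation in hand, part (a) becomes routine. Differentiating $k$ times under the integral sign -- legitimate because the $x$-derivatives of the integrand are dominated, uniformly in $x$, by the $L^1(0,\infty)$-function $(2t)^k e^{-t^2}$ -- gives
\begin{equation*}
{\rm daw}^{(k)}(x) = \int_0^\infty (2t)^k e^{-t^2}\sin\!\big(2xt + \tfrac{k\pi}{2}\big)\,dt .
\end{equation*}
Each of these is a fixed linear combination of the Fourier sine and cosine transforms of the $L^1$-function $t\mapsto t^k e^{-t^2}$, so the Riemann--Lebesgue lemma yields ${\rm daw}^{(k)}(x)\goesto 0$ as $|x|\goesto\infty$, which is exactly the claim in (a).

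For part (b) I would exploit the even part of the Hermite generating function. From $e^{2xv-v^2}=\sum_{n=0}^\infty \frac{H_n(x)}{n!}v^n$ one extracts $\sum_{n=0}^\infty \frac{H_{2n}(x)}{(2n)!}w^n = e^{-w}\cosh(2x\sqrt{w})$, an identity between entire functions of $w$. Setting $w=-u/2$, multiplying by $e^{-u}$, and integrating over $u\in(0,\infty)$ against $n!=\int_0^\infty u^n e^{-u}\,du$ converts the coefficient $\frac{1}{(2n)!}$ into the desired $\frac{(-1)^n n!}{2^n(2n)!}$; after interchanging sum and integral (justified by absolute convergence, the Hermite coefficients decaying factorially in $n$ for fixed $x$) one is left with $\int_0^\infty e^{-u/2}\cos(x\sqrt{2u})\,du$. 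The substitution $u=2t^2$ rewrites this as $2\int_0^\infty 2t\,e^{-t^2}\cos(2xt)\,dt = 2\,{\rm daw}^{(1)}(x)$, so that $\tfrac12\sum_{n=0}^\infty \frac{(-1)^n n!}{2^n(2n)!}H_{2n}(x)={\rm daw}^{(1)}(x)$, as asserted.

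The main obstacle is the decay statement in (a). Working from the definition, repeated differentiation produces the recursion ${\rm daw}^{(k+1)} = -2x\,{\rm daw}^{(k)} - 2k\,{\rm daw}^{(k-1)}$, whose leading terms cancel and therefore resist any crude inductive estimate; the real content of the argument is thus finding the representation ${\rm daw}(x)=\int_0^\infty e^{-t^2}\sin(2xt)\,dt$, after which boundedness of every derivative is automatic and decay drops out of Riemann--Lebesgue. The only remaining care is bookkeeping: checking the hypotheses for differentiation under the integral sign and for the interchange of sum and integral in (b).
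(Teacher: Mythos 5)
Your proposal is correct, and it takes a genuinely different route from the paper's. For (a), the paper invokes the asymptotic expansion ${\rm daw}(x) = \frac{1}{2x}\sum_{j=0}^{n-1}\frac{(2j-1)!!}{(2x^2)^j}+\frac{(2n-1)!!}{2^n}R_n(x)$ from the Atlas of Functions and then does careful bookkeeping on the derivatives of the remainder $R_n$ for $n$ large relative to $k$; you instead derive the representation ${\rm daw}(x)=\int_0^\infty e^{-t^2}\sin(2xt)\,dt$ from the linear ODE ${\rm daw}^{(1)}=1-2x\,{\rm daw}$, differentiate under the integral, and let Riemann--Lebesgue do the work. Your version is more self-contained (no external asymptotic series needed) and, as a bonus, gives boundedness of every derivative for free, which is exactly what Lemma 2(i) of the paper needs. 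For (b), the paper proves that the Hermite series defines a $C^\infty$ function $F$ via term-by-term differentiation, computes the MacLaurin coefficients of $F$ using $H_{2n}(0)$ and the binomial series, and then --- the delicate step --- verifies that the Taylor series actually converges to $F$ by estimating the remainder $F^{(k)}(\xi_k)x^k/k!$; you bypass that entirely by pairing the even part of the Hermite generating function with the Gamma integral $n!=\int_0^\infty u^ne^{-u}\,du$ and reducing to the integral representation from (a). The only hypothesis you should state explicitly is the absolute convergence needed for the sum--integral interchange, namely $\sum_n \frac{n!\,|H_{2n}(x)|}{2^n(2n)!}<\infty$; this follows from the bound $|H_{2n}(x)|\leq Ke^{x^2/2}2^n\sqrt{(2n)!}$ (the same estimate the paper uses) together with $n!/\sqrt{(2n)!}=O(\sqrt{n}\,2^{-n})$ --- the decay is geometric rather than ``factorial'' as you loosely put it, but it is more than enough. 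Both arguments are sound; yours trades the paper's reliance on tabulated expansions and Taylor-remainder estimates for two clean classical tools, and is arguably the cleaner of the two.
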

\begin{lemma}\label{lemma:alku}
There exists a $C^\infty$-function $f:\R\to \R$, such that
\begin{itemize}
\item[(i)] each derivative $f^{(n)}$ of $f$, $n=0,1,2\ldots$ is a bounded
function, and
\item[(ii)] $\bra n|f(Q)|n\ket = \delta_{n0}$, for all $n\in \N$, where $\delta$ is the Kronecker delta.
\end{itemize}
\end{lemma}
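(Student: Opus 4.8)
The plan is to produce the function explicitly by setting $f := 2\,{\rm daw}^{(1)}$, and then to verify (i) and (ii) separately. Property (i) is immediate from Lemma~\ref{dawson}(a): for every $n$ we have $f^{(n)} = 2\,{\rm daw}^{(n+1)}$, which is continuous and tends to $0$ as $x\to\pm\infty$, hence is bounded on all of $\R$; and $f$ is a real-valued $C^\infty$-function because ${\rm daw}$ is. So the entire content lies in (ii), the computation of the diagonal matrix elements.

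For (ii) I would use that $Q$ acts as multiplication by $x$ and that the Hermite functions are real, so that
\[
\langle n|f(Q)|n\rangle = \int_\R f(x)\,h_n(x)^2\,dx = \frac{1}{2^n n!\sqrt\pi}\int_\R f(x)\,H_n(x)^2 e^{-x^2}\,dx,
\]
which reduces the problem to the integrals $\int_\R f(x)H_{2j}(x)e^{-x^2}\,dx$. Indeed, the standard linearization identity
\[
H_n(x)^2 = \sum_{j=0}^n \binom{n}{j}^2 2^{n-j}(n-j)!\,H_{2j}(x)
\]
involves only even-index Hermite polynomials, so each matrix element becomes a finite combination of those integrals.

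The crucial input is Lemma~\ref{dawson}(b). Combined with the orthogonality relation $\int_\R H_{2k}(x)H_{2j}(x)e^{-x^2}\,dx = \delta_{kj}\,2^{2j}(2j)!\sqrt\pi$, it gives
\[
\int_\R f(x)H_{2j}(x)e^{-x^2}\,dx = 2\int_\R {\rm daw}^{(1)}(x)H_{2j}(x)e^{-x^2}\,dx = \sqrt\pi\,(-1)^j 2^j j!.
\]
Substituting this into the two displays above, cancelling the factor $2^{n-j}2^j=2^n$, and using $\binom{n}{j}^2 j!(n-j)! = n!\binom{n}{j}$, the matrix element collapses to
\[
\langle n|f(Q)|n\rangle = \sum_{j=0}^n \binom{n}{j}(-1)^j = (1-1)^n = \delta_{n0}
\]
by the binomial theorem, which is exactly (ii).

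The one step requiring genuine care is the interchange of summation and integration hidden in the evaluation of $\int_\R f(x)H_{2j}(x)e^{-x^2}\,dx$, i.e.\ integrating the series of Lemma~\ref{dawson}(b) term by term against $H_{2j}e^{-x^2}$. I would justify it by noting that ${\rm daw}^{(1)}\in L^2(\R,e^{-x^2}dx)$ (it is bounded) and that the series of (b) converges there: with coefficients $c_k=\tfrac12(-1)^k k!/(2^k(2k)!)$ one has $\sum_k c_k^2\,\|H_{2k}\|^2 = \tfrac{\sqrt\pi}{4}\sum_k (k!)^2/(2k)! <\infty$, so the partial sums are Cauchy in $L^2(e^{-x^2}dx)$; since they also converge pointwise to ${\rm daw}^{(1)}$, the $L^2$-limit is ${\rm daw}^{(1)}$, and pairing with $H_{2j}e^{-x^2}$ extracts the single coefficient by continuity of the inner product. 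The real obstacle, however, is not this routine estimate but the initial guess: recognizing that the Hermite coefficients of ${\rm daw}^{(1)}$ recorded in (b) are precisely the ones that force the alternating binomial sum $\sum_j\binom{n}{j}(-1)^j$, so that $f=2\,{\rm daw}^{(1)}$ is the correct candidate; once this is seen, the rest of the argument is essentially forced.
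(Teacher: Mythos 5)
Your proposal is correct and follows essentially the same route as the paper: the same candidate $f=2\,{\rm daw}^{(1)}$, property (i) from Lemma~\ref{dawson}(a), and for (ii) the Hermite expansion of Lemma~\ref{dawson}(b) combined with orthogonality to collapse the matrix element to $\sum_{j=0}^n(-1)^j\binom{n}{j}=\delta_{n0}$. The only (cosmetic) difference is that you evaluate $\int_\R H_{2j}H_n^2e^{-x^2}\,dx$ via the Hermite linearization identity, whereas the paper quotes the triple-product integral from Gradshteyn--Ryzhik 7.375(2); these are equivalent, and your $L^2(e^{-x^2}dx)$ justification of the term-by-term integration parallels the paper's $L^2$-convergence argument for $fh_0$.
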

\begin{proof}
We put $f=2\,{\rm daw}^{(1)}$. According to Lemma \ref{dawson}, $f$ is
a $C^\infty$-function and (i) holds. Lemma \ref{dawson} (b) gives the expansion
\begin{equation}\label{hermiteseries}
f(x) = \sum_{n=0}^\infty \frac{(-1)^n n!}{2^n (2n)!}H_{2n}(x), \ x\in \R.
\end{equation}
Since the series in \eqref{hermiteseries} converges pointwise, we get
$$
f(x)h_0(x)
= \sum_{n=0}^\infty \frac{(-1)^n n!}{2^n (2n)!}\sqrt{2^{2n} (2n)!}\,h_{2n}
= \sum_{n=0}^\infty \frac{(-1)^n n!}{\sqrt{(2n)!}}h_{2n}(x)$$
for each $x\in \R$. In addition, $\sum_{n=0}^\infty (n!)^2/ (2n)!<\infty$,
so the series converges also in $L^2$-norm. Noting also that $H_n^2h_0\in L^2(\R)$, we can justify the following computation.
\begin{eqnarray*}
  \bra h_n |f(Q)h_n\ket &=& \frac{1}{2^nn!}\bra H_n^2h_0|fh_0\ket = 
  \frac{1}{2^nn!}\sum_{k=0}^\infty \frac{(-1)^k k!}{2^k (2k)!} \bra H_n^2h_0|\sqrt{2^{2k} (2k)!}h_{2k}\ket\\
   &=& \frac{1}{2^nn!}\sum_{k=0}^n \frac{(-1)^k k!}{2^k (2k)!}
\bra H_n^2h_0|H_{2k}h_0\ket\\
&=& \frac{1}{2^nn!}\sum_{k=0}^n \frac{(-1)^k k!}{2^k (2k)!\sqrt{\pi}}
\int_{\R} H_{2k}(x)(H_n(x))^2e^{-x^2}\, dx\\
&=& \frac{1}{2^nn!}\sum_{k=0}^n \frac{(-1)^k k!}{2^k (2k)!\sqrt{\pi}}
\frac{2^{k+n}\sqrt{\pi}(2k)!(n!)^2}{(n-k)!(k!)^2}=\sum_{k=0}^n\frac{(-1)^kn!}{(n-k)!k!}\\
&=& \sum_{k=0}^n (-1)^k\binom nk = \lim_{y\to1}(1-y)^n=\delta_{n0}.
\end{eqnarray*}
Here the finite sum after the third equality is obtained by noting that
$h_{2k}$ is orthogonal to $H_n^2h_0$ whenever $k>n$, which is due to the fact
that the latter function is a linear combination of Hermite functions
$h_0,\ldots ,h_{2n}$. The fifth equality follows from the formula 7.375(2) of
\cite[p.\ 838]{Gradshteyn}.
\end{proof}

Now we choose a function $f$ satisfying the conditions of Lemma
\ref{lemma:alku}. The function $f$ will remain fixed throughout the rest
of the paper.

\begin{lemma}\label{lemma:vaiheII}
$\bra n+k|f^{(k)}(Q)|n\ket = (-1)^k\sqrt{2^kk!}\,\delta_{0n}$ for all $k,n\in \N$.
\end{lemma}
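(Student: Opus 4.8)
The plan is to argue by induction on $k$, using the commutation relation \eqref{commutatorI} to step the order of the derivative down from $k$ to $k-1$. The base case $k=0$ is nothing but condition (ii) of Lemma~\ref{lemma:alku}, since $(-1)^0\sqrt{2^0\,0!}\,\delta_{0n}=\delta_{0n}$.

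For the inductive step I would apply \eqref{commutatorI} with $g=f^{(k-1)}$. By Lemma~\ref{lemma:alku}(i) this $g$ is bounded and continuously differentiable, so the relation is available; rewriting it gives, on $\h S(\R)$,
\[
\tfrac{1}{\sqrt 2}f^{(k)}(Q)\vp=\big(f^{(k-1)}(Q)a^*-a^*f^{(k-1)}(Q)\big)\vp .
\]
Since the Hermite functions $|n\ket$ lie in $\h S(\R)$, I may take the matrix element $\bra n+k|\,\cdot\,|n\ket$ of this identity. Using $a^*|n\ket=\sqrt{n+1}\,|n+1\ket$ on the right and $\bra n+k|a^*=\sqrt{n+k}\,\bra n+k-1|$ on the left, one obtains
\[
\tfrac{1}{\sqrt 2}\bra n+k|f^{(k)}(Q)|n\ket=\sqrt{n+1}\,\bra n+k|f^{(k-1)}(Q)|n+1\ket-\sqrt{n+k}\,\bra n+k-1|f^{(k-1)}(Q)|n\ket .
\]

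Both matrix elements on the right are now of the shape covered by the induction hypothesis. The first one is $\bra(n+1)+(k-1)|f^{(k-1)}(Q)|n+1\ket$, which by hypothesis equals $(-1)^{k-1}\sqrt{2^{k-1}(k-1)!}\,\delta_{0,n+1}$ and therefore vanishes \emph{identically}, since $n+1\ge 1$; the second equals $(-1)^{k-1}\sqrt{2^{k-1}(k-1)!}\,\delta_{0n}$. The automatic vanishing of the first term is the crux of the argument: it is exactly what collapses the recursion to the single surviving case $n=0$. Setting $n=0$ and simplifying $\sqrt 2\,\sqrt{2^{k-1}}=\sqrt{2^k}$ and $\sqrt{k}\,\sqrt{(k-1)!}=\sqrt{k!}$ then gives $\bra k|f^{(k)}(Q)|0\ket=(-1)^k\sqrt{2^k k!}$, while for $n\ge 1$ both terms are $0$; this is precisely the asserted formula.

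I do not expect a genuine obstacle here: once the induction is set up so that the commutator lowers the derivative index, the computation is routine. The only points needing care are checking the hypotheses of \eqref{commutatorI} for $g=f^{(k-1)}$ (boundedness and $C^1$, both furnished by Lemma~\ref{lemma:alku}(i)) and noting that $|n\ket\in\h S(\R)$, so that the commutator identity may legitimately be inserted into the inner product.
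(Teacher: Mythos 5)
Your proposal is correct and follows essentially the same route as the paper: induction on the derivative order, base case from Lemma~\ref{lemma:alku}(ii), and the commutator \eqref{commutatorI} applied to $g=f^{(k-1)}$ (the paper steps from $k$ to $k+1$ rather than $k-1$ to $k$, which is immaterial). You also correctly identify the crux — the term $\delta_{0,n+1}$ vanishes identically because $n+1>0$ — which is exactly the observation the paper highlights parenthetically.
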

\begin{proof}
We proceed by induction with respect to $k$; the initial step is provided
by condition (ii) of Lemma \ref{lemma:alku}. The induction assumption is that for some $k\in \N$,
the equality
$$\bra n+k|f^{(k)}(Q)|n\ket = (-1)^k\sqrt{2^kk!}\,\delta_{0n}$$
holds for all $n\in \N$. We must show that
$$\bra n+k+1|f^{(k+1)}(Q)|n\ket = (-1)^{k+1}\sqrt{2^{k+1}(k+1)!}\,\delta_{0n}, \ \ n\in \N.$$
But by using \eqref{commutatorI}, we get
\begin{align*}
 & \bra n+k+1|f^{(k+1)}(Q)|n\ket = \sqrt{2}\bra n+k+1|f^{(k)}(Q)a^*|n\ket
-\sqrt{2}\bra n+k+1|a^*f^{(k)}(Q)|n\ket\\
&= \sqrt{2(n+1)}\bra (n+1)+k|f^{(k)}(Q)|n+1\ket
-\sqrt{2(n+k+1)}\bra n+k|f^{(k)}(Q)|n\ket\\
&=-\sqrt{2(n+k+1)}(-1)^k\sqrt{2^kk!}\,\delta_{0n} = (-1)^{k+1}\sqrt{2^{k+1}(k+1)!}\,\delta_{0n}
\end{align*}
for all $n\in \N$ by the induction assumption.
(Note, in particular, that the first term in the expression following
the second equality is indeed zero by the induction assumption, because $n+1>0$.)
\end{proof}

\begin{lemma}\label{lemma:lopullinen}
$
\bra n+k|f^{(k+2l)}(Q)|n\ket = 2^l(-1)^k\sqrt{2^kl!(l+k)!}\,\delta_{ln}, \ k,l,n\in\N, \ n\geq l.$
\end{lemma}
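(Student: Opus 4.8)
The plan is to fix $k$ arbitrarily and argue by induction on $l$, treating $n$ as a parameter ranging over $n\geq l$. The base case $l=0$ is exactly Lemma \ref{lemma:vaiheII}: there $2^0\sqrt{2^k\,0!\,k!}=\sqrt{2^kk!}$, $\delta_{0n}$ matches $\delta_{ln}$, and the constraint $n\geq 0$ is vacuous. For the inductive step I would assume the formula for a given $l$ (same $k$, all $n\geq l$) and deduce it for $l+1$ and all $n\geq l+1$.

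The device I would use is to raise the order of the derivative by two while preserving the offset structure $\bra n+k|\,\cdot\,|n\ket$ of the matrix element; this forces the use of one raising and one lowering operator. Concretely, I would apply \eqref{commutatorI} with $g=f^{(k+2l)}$ to write $f^{(k+2l+1)}(Q)=\sqrt 2\big(f^{(k+2l)}(Q)a^*-a^*f^{(k+2l)}(Q)\big)$ on $\h S(\R)$, then \eqref{commutatorII} with $g=f^{(k+2l+1)}$ to write $f^{(k+2l+2)}(Q)=\sqrt 2\big(af^{(k+2l+1)}(Q)-f^{(k+2l+1)}(Q)a\big)$. Both relations apply because every $f^{(m)}$ is $C^\infty$ with bounded derivatives by Lemma \ref{lemma:alku}(i), and since $f^{(m)}(Q)$ maps $\h S(\R)$ into itself while $|n\ket=h_n\in\h S(\R)$, all the intermediate compositions are legitimate. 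Substituting the first identity into the second and evaluating the bras and kets by means of $a|n\ket=\sqrt n\,|n-1\ket$, $a^*|n\ket=\sqrt{n+1}\,|n+1\ket$, $a^*a=N$ and $aa^*=N+1$, I expect to obtain the three-term recursion
\begin{align*}
\bra n+k|f^{(k+2l+2)}(Q)|n\ket
= 2\Big[ &\sqrt{(n+1)(n+k+1)}\,\bra (n+1)+k|f^{(k+2l)}(Q)|n+1\ket \\
&-(2n+k+1)\,\bra n+k|f^{(k+2l)}(Q)|n\ket \\
&+\sqrt{n(n+k)}\,\bra (n-1)+k|f^{(k+2l)}(Q)|n-1\ket\Big] .
\end{align*}

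Now I would invoke the induction hypothesis. For $n\geq l+1$ the three inner indices $n+1,n,n-1$ are all $\geq l$, so the hypothesis applies to each term; since it carries the factor $\delta_{l,\cdot}$ and both $n$ and $n+1$ exceed $l$, the first two terms vanish, and the last survives only when $n-1=l$, i.e.\ $n=l+1$. Evaluating the surviving coefficient and simplifying with $l!(l+1)=(l+1)!$ and $(l+k)!(l+k+1)=(l+k+1)!$ should produce exactly $2^{l+1}(-1)^k\sqrt{2^k(l+1)!(l+1+k)!}$ at $n=l+1$ and $0$ for $n>l+1$, which closes the induction.

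The step requiring the most care is the bookkeeping of the range condition $n\geq l$. It is not a mere convenience: evaluated at $n=l$ or $n=l-1$ the first two terms of the recursion would be nonzero, so it is precisely the restriction $n\geq l+1$ in the conclusion for $l+1$ that suppresses them and lets the induction close. I would also note that the lowering operator causes no trouble, since $n\geq l+1\geq 1$ throughout the step.
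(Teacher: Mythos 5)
Your proof is correct, and it closes the induction by a mechanism slightly different from the paper's. The paper also inducts on $l$ with Lemma \ref{lemma:vaiheII} as the base case, but in the inductive step it applies only the single commutator \eqref{commutatorII} with $g=f^{(k+2l+1)}$ and then reinterprets the resulting derivative order $k+2l+1$ as $(k+1)+2l$, i.e.\ it invokes the induction hypothesis with $k$ replaced by $k+1$. This is why the paper's induction hypothesis must be quantified over \emph{all} $k$ simultaneously (the paper emphasizes this), and it yields a two-term expression in which one term dies because $\delta_{ln}=0$ for $n\geq l+1$ and the other survives via $\delta_{l,n-1}$. You instead compose both commutators \eqref{commutatorI} and \eqref{commutatorII} to raise the derivative order by two at fixed $k$, obtaining the three-term recursion
\begin{equation*}
\bra n+k|f^{(k+2l+2)}(Q)|n\ket = 2\bra n+k|\bigl(aGa^* - aa^*G - Ga^*a + a^*Ga\bigr)|n\ket,\qquad G=f^{(k+2l)}(Q),
\end{equation*}
whose coefficients $\sqrt{(n+1)(n+k+1)}$, $-(2n+k+1)$, $\sqrt{n(n+k)}$ I have checked. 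Your route lets you run the induction for each fixed $k$ separately, at the cost of one more commutator and one more term to kill; the surviving $n=l+1$ coefficient $2\sqrt{(l+1)(l+k+1)}\cdot 2^l\sqrt{2^kl!(l+k)!}=2^{l+1}\sqrt{2^k(l+1)!(l+k+1)!}$ matches the claim. Your attention to the range condition $n\geq l+1$ (which is what annihilates the first two terms) and to the domain questions for composing the commutators on $\h S(\R)$ is exactly the care the argument needs.
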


\begin{proof} Now we use induction with respect to $l$, so the initial step $l=0$ is given by Lemma
\ref{lemma:vaiheII}. The induction assumption is that
$$\bra n+k|f^{(k+2l)}(Q)|n\ket = 2^l(-1)^k\sqrt{2^kl!(l+k)!}\,\delta_{ln}$$ holds for some $l\in \N$,
\emph{all} $k\in \N$, and \emph{all} $n\geq l$. We have to show that this holds also when $l$ is replaced
by $l+1$. Accordingly, let $k\in \N$, and $n\in \N$ with $n\geq l+1$.
Using the commutation relation \eqref{commutatorII} and the induction assumption, we get
\begin{align*}
& \bra n+k|f^{(k+2(l+1))}(Q)|n\ket = -\sqrt{2}\bra n+k|f^{(k+2l+1)}(Q)a|n\ket +\sqrt{2}\bra n+k|af^{(k+2l+1)}(Q)|n\ket\\
&= -\sqrt{2n}\bra (n-1)+(k+1)|f^{((k+1)+2l)}(Q)|n-1\ket
+\sqrt{2(n+k+1)}\bra n+(k+1)|f^{((k+1)+2l)}(Q)|n\ket\\
&= -\sqrt{2n}2^l(-1)^{k+1}\sqrt{2^{k+1}l!(l+k+1)!}\,\delta_{l,n-1}
= (-1)^k\sqrt{2(l+1)}2^l\sqrt{2}\sqrt{2^{k}l!(l+1+k)!}\,\delta_{l+1,n}\\
& = 2^{l+1}(-1)^k\sqrt{2^k(l+1)!(l+1+k)!}\,\delta_{l+1,n}.
\end{align*}
Here the induction assumption was applied to the first term following the second equality
with $n$ and $k$ replaced by $n-1$ and $k+1$, and to the second term with $n$ and $k$ replaced by
$n$ and $k+1$. Here it is essential to note that $n\geq l+1>l$, which makes the second term zero.
\end{proof}
 
In order to construct the proof for the informational completeness of the quadratures, we
need some additional tools. First define, for each fixed $k\in \N$ and $X\in \h B(\R)$,
$$
V^k(X):=\int_0^{2\pi} e^{-ik\theta} P^{Q_\theta}(X)\, d\theta\in L(\hil),
$$
where the integral is to be understood in the $\sigma$-weak operator topology.
Indeed, for each trace class operator $T$, the map
$\theta\mapsto \tr[TP^{Q_\theta}(X)]=\tr[T e^{i\theta N}P^{Q}(X)e^{-i\theta N}]$ is continuous,\footnote{
This can easily be seen by e.g.\ considering a positive trace class operator $T$, using its spectral resolution and applying the strong continuity of the map $\theta\mapsto e^{i\theta N}$.}
and $|\tr[TE^{Q_\theta}(X)]|\leq \|T\|_1\|P^{Q_\theta}(X)\|\leq \|T\|_1$, so
the integral exists in the $\sigma$-weak sense, and represents a bounded operator, with
$\|V^k(X)\|\leq 2\pi$.

Next, notice that for each $k\in \N$, the map $X\mapsto V^k(X)$ is an operator measure, that is,
$\sigma$-additive in the weak operator topology. In fact, if $(X_n)_{n\in \N}$ is a sequence of
mutually disjoint sets in $\h B(\R)$, then for any $l\in \N$,
$$\left|\sum_{n=0}^l e^{-ik\theta}\bra \vp|P^{Q_\theta}(X_n)\vp\ket\right|\leq \sum_{n=0}^l \bra \vp|P^{Q_\theta}(X_n)\vp\ket\leq \bra \vp|P^{Q_\theta}(\cup_{n=0}^\infty X_n)\vp\ket \leq \|\vp\|^2,
$$
so the dominated convergence theorem can be applied to get
\begin{align*}
\sum_{n=0}^\infty \bra\vp|V^k(X_n)\vp\ket &= \sum_{n=0}^\infty \int_0^{2\pi} e^{-ik\theta} \bra \vp|P^{Q_\theta}(X_n)\vp\ket \, d\theta = \int_0^{2\pi} e^{-ik\theta} \bra \vp|P^{Q_\theta}(\cup_{n=0}^\infty X_n)\vp\ket \, d\theta\\
& = \bra \vp|V^k(\cup_{n=0}^\infty X_n)\vp\ket.
\end{align*}

Let $g:\R\to \R$ be any bounded Borel function. Then the operator integral $V^k[g]=\int g \,dV^k$
can be defined in the $\sigma$-weak sense, as a bounded operator.
This follows from the fact that $g$ is bounded and
$$|V^k_T|(\R)\leq 4\sup_{X\in \h B(\R)}|\tr[TV^k(X)]|\leq 8\pi \|T\|_1$$
for any trace class operator $T$, with $|V^k_T|$ denoting the total variation of the complex measure
$V^k_T=\tr[TV^k(\cdot)]$. (The map $X\mapsto \tr[TV^k(X)]$ is a complex measure, because the weak
and $\sigma$-weak operator topologies coincide in a norm-bounded set.)

\begin{lemma}\label{explicit}
For any bounded function $g:\R\to \R$, and a trace class operator $T$, we have
$$
\tr[TV^k[g]] = 2\pi\sum_{n=0}^\infty \bra n|T|n+k\ket \,\bra n+k|g(Q)|n\ket.
$$ 
\end{lemma}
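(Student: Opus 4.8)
The plan is to reduce the statement to a Fourier-coefficient computation and then to justify the interchange of summation and integration by a finite-rank truncation. The starting observation is the operator identity $g(Q_\theta)=e^{i\theta N}g(Q)e^{-i\theta N}$, valid for every bounded Borel $g$ because $Q_\theta=e^{i\theta N}Qe^{-i\theta N}$ and the functional calculus is preserved under unitary conjugation; the case $g=\chi_X$ recovers $P^{Q_\theta}(X)=e^{i\theta N}P^{Q_0}(X)e^{-i\theta N}$. Using this together with the definitions of $V^k(X)$ and of the operator integral $V^k[g]=\int g\,dV^k$, I would first establish, for every trace class $T$ and every bounded Borel $g$,
\[
\tr[TV^k[g]] = \int_0^{2\pi} e^{-ik\theta}\,\tr\bigl[T\,e^{i\theta N} g(Q) e^{-i\theta N}\bigr]\,d\theta .
\]
This holds for $g=\chi_X$ directly from the definition of $V^k(X)$, hence for simple functions by linearity. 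To pass to an arbitrary bounded Borel $g$, approximate it pointwise by simple functions $g_m$ with $|g_m|\le\|g\|_\infty$ (where $\|g\|_\infty=\sup_x|g(x)|$) and apply dominated convergence on the left with respect to the finite-total-variation measure $V^k_T=\tr[TV^k(\cdot)]$, and on the right with respect to $\theta$ after first taking the limit inside the inner integral against the complex measure $X\mapsto\tr[TP^{Q_\theta}(X)]$ (whose total variation is bounded by $\|T\|_1$).

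Next I would evaluate the integrand. Expanding in the number basis gives
\[
\tr\bigl[T\,e^{i\theta N} g(Q) e^{-i\theta N}\bigr] = \sum_{m,n} e^{i\theta(n-m)}\,\bra m|T|n\ket\,\bra n|g(Q)|m\ket ,
\]
and integrating $e^{-ik\theta}$ against this and using $\int_0^{2\pi} e^{i(n-m-k)\theta}\,d\theta = 2\pi\,\delta_{n,m+k}$ retains exactly the diagonal $n=m+k$, producing the claimed series $2\pi\sum_{n=0}^\infty \bra n|T|n+k\ket\,\bra n+k|g(Q)|n\ket$.

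The main obstacle, and the step that requires genuine care, is justifying the interchange of this double summation with the $\theta$-integration. To sidestep every convergence question I would argue by truncation: set $P_M=\sum_{n=0}^M |n\ket\bra n|$ and $T_M=P_MTP_M$, so that $T_M\to T$ in trace norm. For the finite-rank $T_M$ the integrand is a genuine trigonometric polynomial, the interchange with $\int_0^{2\pi}d\theta$ is immediate, and one obtains $2\pi\sum_{n=0}^{M-k}\bra n|T|n+k\ket\,\bra n+k|g(Q)|n\ket$. On the left, $|\tr[(T-T_M)e^{i\theta N} g(Q) e^{-i\theta N}]|\le \|T-T_M\|_1\,\|g\|_\infty$ uniformly in $\theta$, so the integral for $T_M$ converges to the integral for $T$ as $M\to\infty$; on the right, the passage $M\to\infty$ is legitimate because the series converges absolutely. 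This absolute convergence is the one quantitative point worth isolating: writing $\bra n|T|n+k\ket=\bra n|TS^k|n\ket$ with $S$ the one-sided shift $S|n\ket=|n+1\ket$, and using $|\bra n+k|g(Q)|n\ket|\le\|g\|_\infty$, I get
\[
\sum_{n=0}^\infty \bigl|\bra n|T|n+k\ket\,\bra n+k|g(Q)|n\ket\bigr| \le \|g\|_\infty\sum_{n=0}^\infty |\bra n|TS^k|n\ket| \le \|g\|_\infty\,\|TS^k\|_1 \le \|g\|_\infty\,\|T\|_1 ,
\]
where the middle inequality uses that $\sum_n|\bra n|B|n\ket|\le\|B\|_1$ for any trace-class $B$ (take the diagonal unitary $D$ with $Dn\rangle=e^{-i\alpha_n}|n\ket$, $\bra n|B|n\ket=|\bra n|B|n\ket|e^{i\alpha_n}$, so that $\sum_n|\bra n|B|n\ket|=\tr[DB]\le\|D\|\,\|B\|_1$), and the last inequality uses that $S^k$ is an isometry. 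Comparing the two limits then yields the asserted formula.
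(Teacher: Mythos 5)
Your argument is correct, and every analytic step you flag (the two dominated-convergence passages, the trace-norm convergence $T_M\to T$, the bound $\sum_n|\bra n|B|n\ket|\le\|B\|_1$ via a diagonal unitary, and $\|TS^k\|_1\le\|T\|_1$) checks out; but it is organized quite differently from the paper's proof. The paper never writes $\tr[TV^k[g]]$ as an explicit $\theta$-integral for general bounded Borel $g$. Instead it specializes the defining relation $V^k_T(X)=\int_0^{2\pi}e^{-ik\theta}\tr[TP^{Q_\theta}(X)]\,d\theta$ to the rank-one operators $T=|n\ket\bra m|$, for which the $\theta$-integral is a single Fourier coefficient and there is nothing to interchange: one finds that the complex measure $V^k_{|n\ket\bra m|}$ equals $2\pi\,\delta_{m,n+k}$ times $X\mapsto\bra n+k|P^{Q}(X)|n\ket$, whence $\bra m|V^k[g]|n\ket=2\pi\,\delta_{m,n+k}\,\bra n+k|g(Q)|n\ket$ follows directly from the $\sigma$-weak definition of $V^k[g]$, with no need to climb from indicators through simple functions to general $g$. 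The trace formula then drops out by expanding $\tr[TV^k[g]]=\sum_n\bra n|TV^k[g]|n\ket$ in the number basis, the double sum collapsing because $V^k[g]$ is supported on a single off-diagonal. That route buys the full matrix of the operator $V^k[g]$ (not merely its pairings with trace-class operators) and eliminates both of your approximation steps, the simple-function approximation in $g$ and the truncation $T_M=P_MTP_M$. What your route buys is the conceptually transparent identity $\tr[TV^k[g]]=\int_0^{2\pi}e^{-ik\theta}\tr[T\,g(Q_\theta)]\,d\theta$, exhibiting the lemma as a Fourier-coefficient statement, together with an explicit absolute-convergence estimate $\sum_n|\bra n|T|n+k\ket\,\bra n+k|g(Q)|n\ket|\le\|g\|_\infty\|T\|_1$ --- a point the paper leaves implicit when it interchanges its two sums in the final display. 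The only blemish is typographical: $Dn\rangle$ should read $D|n\rangle$ in your definition of the diagonal unitary.
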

\begin{proof}
First note that
$\bra m|V^k[g]|n\ket = \tr[|n\ket\bra m|V^k[g]] = \int g \, dV^k_{|n\ket\bra m|}$ by definition.
On the other hand, for each $X\in \h B(\R)$, we get
$$V^k_{|n\ket\bra m|}(X)=\int_0^{2\pi} e^{-ik\theta} \bra m|P^{Q_\theta}(X)|n\ket\, d\theta = 
\bra m|P^{Q}(X)|n\ket \int_0^{2\pi} e^{-ik\theta} e^{i\theta(m-n)}\, d\theta.
$$
This equals $2\pi\bra n+k|P^{Q}(X)|n\ket$ if $m=n+k$ and zero otherwise. Hence,
$$\bra n+k|V^k[g]|n\ket = 2\pi \int g\, dP^Q_{|n\ket\bra n+k|}= 2\pi \bra n+k|g(Q)|n\ket,$$
and $\bra m|V^k[g]|n\ket=0$ whenever $m\neq n+k$. Thus,
$$
\tr[TV^k[g]]= \sum_{n=0}^\infty \bra n|TV^k[g]|n\ket
= \sum_{n=0}^\infty\sum_{m=0}^\infty \bra n|T|m\ket\bra m|V^k[g]|n\ket
= 2\pi\sum_{n=0}^\infty \bra n|T|n+k\ket \bra n+k|g(Q)|n\ket.
$$ 
\end{proof}

Now we are ready to prove the actual result of the paper.
\begin{theorem} Let $S$ be a dense subset of $[0,2\pi)$. The set $\{P^{Q_\theta}\mid \theta\in S\}$ of observables is
informationally complete.
\end{theorem}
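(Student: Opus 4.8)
The plan is to invoke the characterization noted just after the definition of informational completeness: it suffices to prove that a selfadjoint trace class operator $T$ with $\tr[TP^{Q_\theta}(X)]=0$ for all $\theta\in S$ and all $X\in\h B(\R)$ is necessarily $0$. I will do this by showing that every matrix element $\bra m|T|n\ket$ vanishes.

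First I would use the density of $S$. The footnote records that $\theta\mapsto\tr[TP^{Q_\theta}(X)]$ is continuous; since it vanishes on the dense set $S$, it vanishes on all of $[0,2\pi)$. Consequently $\tr[TV^k(X)]=\int_0^{2\pi}e^{-ik\theta}\tr[TP^{Q_\theta}(X)]\,d\theta=0$ for every $k\in\N$ and every $X$, i.e.\ the complex measure $V^k_T=\tr[TV^k(\cdot)]$ is identically zero. Integrating any bounded Borel $g$ against this zero measure then gives $\tr[TV^k[g]]=0$ for all $k\in\N$.

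Now I would feed this into Lemma \ref{explicit}, which turns the vanishing of $\tr[TV^k[g]]$ into
$$0=\sum_{n=0}^\infty\bra n|T|n+k\ket\,\bra n+k|g(Q)|n\ket$$
for every $k$ and every bounded $g$. The key move is to take $g=f^{(k+2l)}$ and run an induction on $l$ that strips off the matrix elements one super-diagonal at a time. The base case $l=0$ uses Lemma \ref{lemma:vaiheII}: there $\bra n+k|f^{(k)}(Q)|n\ket=(-1)^k\sqrt{2^kk!}\,\delta_{0n}$ collapses the sum to its $n=0$ term and forces $\bra 0|T|k\ket=0$ for all $k$. For the inductive step, assuming $\bra m|T|m+k\ket=0$ for all $k$ and all $m<l$, the terms with $n<l$ drop out by the induction hypothesis, while Lemma \ref{lemma:lopullinen} shows $\bra n+k|f^{(k+2l)}(Q)|n\ket=2^l(-1)^k\sqrt{2^kl!(l+k)!}\,\delta_{ln}$ for $n\ge l$, which kills all $n>l$ terms and isolates $n=l$ with a nonzero coefficient; hence $\bra l|T|l+k\ket=0$ for all $k$. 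This yields all upper-triangular matrix elements, and selfadjointness of $T$ supplies the lower-triangular ones, so $T=0$.

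The step I expect to be delicate is the range bookkeeping in the induction. Lemma \ref{lemma:lopullinen} is valid only for $n\ge l$, so the terms $n<l$ lie genuinely outside its scope and can be eliminated only by the induction hypothesis, not by any explicit formula; it is precisely this mismatch that dictates the diagonal-by-diagonal order of the argument. By contrast, the analytic passages---extending the vanishing from $S$ to $[0,2\pi)$ and from $V^k(X)$ to $V^k[g]$---are routine, since the requisite norm bounds on $V^k$ and $V^k[g]$ have already been established.
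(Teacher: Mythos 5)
Your proposal is correct and follows essentially the same route as the paper: extend the vanishing from $S$ to $[0,2\pi)$ by continuity, Fourier-analyze via $V^k$, apply Lemma \ref{explicit}, and then strip the $k$th super-diagonal by induction using $g=f^{(k+2l)}$ together with Lemmas \ref{lemma:vaiheII} and \ref{lemma:lopullinen}, finishing with selfadjointness. The only cosmetic difference is that you phrase the induction over the diagonal index $l$ uniformly in $k$, whereas the paper fixes $k$ and inducts on $n$; the substance, including the observation that the $n<l$ terms must be handled by the induction hypothesis rather than by Lemma \ref{lemma:lopullinen}, is identical.
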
 
\begin{proof}
Let $T\in L(\hil)$ be a selfadjoint trace class operator, such that
$\tr[TP^{Q_\theta}(X)]=0$ for all $X\in \h B(\R)$ and $\theta\in S$.
Since $\theta\mapsto\tr[TP^{Q_\theta}(X)]$ is continuous and $S$ is dense it follows that
$\tr[TP^{Q_\theta}(X)]=0$ for all $X\in \h B(\R)$ and $\theta\in[0,2\pi)$.

Let $k\in \N$ be fixed. By the definition of the operator measure $V^k$, the assumption implies
that $V^k_T(X)=\tr[TV^k(X)]=
\int_0^{2\pi}e^{-ik\theta}\tr[TP^{Q_\theta}(X)]d\theta
=0$ for all $X\in \h B(\R)$. From the definition of the $\sigma$-weak integral $V^k[g]=\int g\, dV^k$,
it follows that $\tr[TV^k[g]]=\int g\, dV^k_T=0$ for any bounded Borel function $g:\R\to \R$.
Hence, by using Lemma
\ref{explicit}, we get
\begin{equation}\label{summation}
\sum_{n=0}^\infty \bra n|T|n+k\ket\,\bra n+k|g(Q)|n\ket=0 
\end{equation}
for any bounded Borel function $g:\R\to\R$.
We show by induction with respect to $n$ that $\bra n|T|n+k\ket = 0$ for all $n\in \N$.
First, put $g=f^{(k)}$ in \eqref{summation} (recall that $f$ was a function satisfying the assumptions of
Lemma \ref{lemma:alku}). By Lemma \ref{lemma:lopullinen}, this gives $\bra 0|T|k\ket = 0$,
which proves the initial step. The induction assumption is that for some $m\in \N$,
$\bra n|T|n+k\ket = 0$ for all $n\in \N$, $n\leq m$. We have to show that this implies
$\bra m+1|T|(m+1)+k\ket = 0$. By the induction assumption and \eqref{summation}, we
have
$$
\sum_{n=m+1}^\infty \bra n|T|n+k\ket\bra n+k|f^{(k+2(m+1))}(Q)|n\ket=0,
$$
where we have simply put $g=f^{(k+2(m+1))}$, which is again a bounded Borel function.
But, according to Lemma \ref{lemma:lopullinen},
$$\bra n+k|f^{(k+2(m+1))}(Q)|n\ket=0, \ \ n> m+1,$$ and
$$\bra (m+1)+k|f^{(k+2(m+1))}(Q)|m+1\ket\neq 0,$$
so that necessarily $\bra m+1|T|(m+1)+k\ket=0$. This completes the induction proof.

We have thus established that $\bra l|T|l+k\ket = 0$ for all $l,k\in \N$. Since $T$ is selfadjoint, this
implies that $T=0$, and the proof is complete.
\end{proof}

Note that the set $S$ in the previous theorem can be chosen to be countable
(e.g. $S= [0,2\pi)\cap \mathbb Q$).
Hence, in principle, it suffices to measure a \emph{sequence} of quadrature
observables in order to determine the state of the system.

Though obvious, it may be worth to note that the quadrature observables $Q_\theta$ are not informationally complete in the sense of statistical expectation,
that is, the numbers $\tr[TQ_\theta]$, $\theta\in[0,2\pi)$, do not, in general, determine the state $T$; for instance, the number states $|n\ket$ are
indistinguishable by the expectations, $\bra n|Q_\theta|n\ket=0$ for all $n$ and for all $\theta$. 
 
\section{Wigner function {\em vs.} phase space distributions}\label{Phasespace}
According to the result in the preceding section, the quadrature observables
$Q_\theta$, $\theta\in S$ ($S\subseteq[0,2\pi)$ is dense) constitute  an informationally complete set of observables, i.e.
the measurement statistics $p^{Q_\theta}_T$, $\theta\in S$, determine uniquely
the state $T$ of the quantum system. The question of experimental implementation of these observables is thus of utmost importance.

The balanced homodyne detection with a strong auxiliary field is a well developed method of experimental quantum physics, and this
method is known to yield the measurement statistics of the quadrature observable $Q_\theta$, depending on the phase $e^{i\theta}$
of the (one-mode) auxiliary field. The heuristic physical argument behind this method is equally well known, see, e.g.\
\cite{Leonhard,Haroche}, the detailed mathematical justification being, however, more involved. 

If $|z\ket$, $z=re^{i\theta}$, is the coherent state of the (one-mode) auxiliary field, then the actually measured observable 
in the balanced homodyne detection is given by a semispectral measure $E^z$, whose first moment operator $E^z[1]$
is an extension of the restriction of the quadrature operator $Q_\theta$ on the domain $D(a)$ of the signal mode 
 operator $a$, and whose noise operator $E^z[2]-E^z[1]^2$ equals with the operator $\frac 12 r^{-2}N$
 (where $N=a^*a$).
Clearly, these results suggest that the high amplitude limit of $E^z$ is the spectral measure $P^{Q_\theta}$ of $Q_\theta$,
notably since the spectral measures are known to be exactly those semispectral measures whose noise operators 
equal zero.  There is, indeed, a rigorous quantum mechanical proof of the fact that in the high amplitude limit the observable
$E^z$ tends to the spectral measure of $Q_\theta$, though the actual meaning of this limit requires more caution \cite{Jukka1}.

The balanced homodyne detection scheme thus allows one to collect the statistics of the quadrature observables.
Assuming that the inverse Radon transform can be applied to the collection of distributions $p^{Q_\theta}_T$,
$\theta\in[0,2\pi)$, one obtains the Wigner function $f^W_T$ of the state $T$,
which is a unique representation of $T$. Clearly, there is no need to use  the inverse Radon transform nor the Wigner function, since the set of
quadrature observables is, in itself, informationally complete. 
In any case,  this method of state reconstruction is unnecessarily complicated, 
since it requires that one measures infinite number of different observables $Q_\theta$, $\theta\in S$.

It is well-known that a beam splitter combined with the $Q_0$ and $Q_{{\pi}/2}$ -sensitive detectors  constitutes a measurement of
a phase space observable $E^D$ of the signal mode, with the generating density operator $D$ depending on the state of the (one-mode)
auxiliary field, see, e.g. \cite[VII.3.7]{OQP}. Using the balanced homodyne detection realization for the 
$Q_0$ and $Q_{{\pi}/2}$ -sensitive detectors one indeed obtains, in a rigorous quantum mechanical sense, an eight-port homodyne detector
realization of an arbitrary phase space observable $E^D$ \cite{Jukka2}. A phase space observable $E^D$ is known to be informationally
complete whenever the generating density operator $D$ 
is such that $\tr[W_{qp}D]\ne 0$ for almost all
phase space points $(q,p)\in\R^2$; here $W_{qp}$ is the Weyl operator, see, e.g.\ \cite{Prugo}. 
In particular, if the auxiliary field is idle, then the measurement statistic obtained from the eight-port homodyne detector is simply
the Husimi function $f^H_T$ of the signal state $T$. Like the Wigner function,  the Husimi function separates states, that is,
the phase space observable $E^{D}$, $D=|0\ket\bra 0|$, is informationally complete. It is obvious that
the state reconstruction via the eight-port homodyne detector is a huge simplification 
when compared with the reconstruction using only  the balanced homodyne detection technique.

\section*{Appendix: The proof of Lemma \ref{dawson}}

To begin the proof of the lemma, we first note that the Dawson's integral is clearly a
$C^\infty$-function. We prove (a) by using the expansion
$$
{\rm daw}(x) = \frac{1}{2x}\sum_{j=0}^{n-1}\frac{(2j-1)!!}{(2x^2)^j}
+\frac{(2n-1)!!}{2^n}R_n(x),
$$
where
$$
R_n(x)=\frac{e^{-x^2}}{x^{2n-1}}\sum_{j=0}^\infty \frac{x^{2j}}{j!(2j-2n+1)}
$$
and $n=1,2,3,\ldots$ (see \cite[equation 42:6:5, p.\ 407]{Atlas}).
Since each derivative is either even or odd, it suffices to consider the
limit $x\goesto \infty$. Clearly, any derivative of the first part tends to
zero at this limit, for any choice of $n$. As for $R_n$, it is easy to see
that for any given $k$, we get $\lim_{x\goesto\infty}R_n^{(k)}(x)=0$
for suffiently large $n$. Indeed, let $$S_n(y)=\sum_{j=0}^\infty \frac{y^j}{j!(2j-2n+1)},$$ so that
$R_n(x) = \frac{e^{-x^2}}{x^{2n-1}}S_n(x^2)$. Since the power series
$S_n(y)$ is clearly convergent for all $y\in \R$, it can be differentiated
$k$ times (for any $k\in \N$) to get
$$S_n^{(k)}(y)=\sum_{j=k}^\infty \frac{y^{j-k}}{(j-k)!(2j-2n+1)}.$$
From this we see that $|S_n^{(k)}(y)|\leq e^{y}$ for any $y>0$ and $n\in \N$.
Now for a fixed $k\in \N$, put e.g. $n=k+1$. Since
$R^{(k)}_n(x)$ is clearly a finite sum of terms of the form
$A_{l,l'}\frac{e^{-x^2}}{x^{2n-1-l}}S_n^{(l')}(x^2)$,
with $-k\leq l\leq k$, $0\leq l'\leq k$, and $A_{l,l'}$ a
constant, it follows that $\lim_{x\goesto\infty}R_n^{(k)}(x)=0$. The proof of (a) is complete.

To prove (b), consider the series
\begin{equation}\label{dawsonhermite}
\frac 12\sum_{n=0}^\infty \frac{(-1)^n n!}{2^n (2n)!}H_{2n}(x).
\end{equation}
We first note that the well-known relation $\frac{d}{dx}H_{l}(x)=2lH_{l-1}(x)$, $l=1,2,3\ldots$, implies
\begin{equation}\label{derivative}
\frac{d^m}{dx^m} H_{2n}(x) = 2n(2n-1)\cdots (2n-m+1)2^m H_{2n-m}(x), \ \ m\leq 2n.
\end{equation}
Using the estimate $|H_{n}(x)|\leq e^{\frac 12 x^2}K 2^{\frac 12 n}\sqrt{n!}$, where $K>0$ is a constant
(\cite[22.14.17, p.\ 787]{Abramowitz}), we get
$$
\left|\frac{(-1)^n n!}{2^n (2n)!}\frac{d^m}{dx^m}H_{2n}(x)\right|\leq  2^{\frac m2}K\frac{n!\,2n(2n-1)\cdots (2n-m+1)\sqrt{(2n-m)!}}{(2n)!}e^{\frac 12x^2} \leq 2^{\frac m2}K \frac{(2n)^mn!}{\sqrt{(2n)!}}e^{\frac 12 x^2}
$$
for all $n,m\in \N$, $m\leq 2n$.
Now $\sum_{n=0}^\infty\frac{(2n)^mn!}{\sqrt{(2n)!}}<\infty$
for any $m\in \N$ by the ratio test, which shows that the series obtained by
differentiating \eqref{dawsonhermite} $m$ times term by term converges uniformly in bounded intervals.
Consequently, that series represents the $m$th derivative of the original series
\eqref{dawsonhermite}, the latter converging to some $C^\infty$-function $F:\R\to\R$ uniformly in
bounded intervals.

By using again the formula $\frac{d}{dx} H_{l}(x) = 2lH_{l-1}(x)$, we get
\begin{eqnarray}
2F^{(2m)}(x) &=& 2^{2m}\sum_{n=m}^\infty \frac{n!(-1)^n H_{2(n-m)}(x)}{2^n(2(n-m))!};\label{evenpw}\\
2F^{(2m+1)}(x) &=& 2^{2m+1}\sum_{n=m+1}^\infty \frac{n!(-1)^n H_{2(n-m)-1}(x)}{2^n(2(n-m)-1)!}.\label{oddpw}
\end{eqnarray}

In order to calculate the MacLaurin series of $F$, we need the expansion
\begin{equation}\label{generalpower}
(1-x)^{-(m+1)} = \sum_{n=0}^\infty\binom{m+n}{n} x^n, \ \ -1<x< 1,
\end{equation}
which can be obtained from the binomial series (see 3.6.9. in \cite[p.\ 15]{Abramowitz}).

Now, using \eqref{evenpw}, the identity $H_{2(n-m)}(0)=(-1)^{n-m}(2(n-m))!/(n-m)!$, $n\geq m$
(22.4.8 in \cite[p.\ 777]{Abramowitz}), and \eqref{generalpower} with $x=\frac 12$, we get
\begin{eqnarray*}
2F^{(2m)}(0) &=& 2^{2m}(-1)^m\sum_{n=m}^\infty\frac{n!}{2^n(n-m)!} = 2^m(-1)^m\sum_{n=0}^\infty\frac{(n+m)!}{2^nn!}\\
&=& 2^m(-1)^mm!\sum_{n=0}^\infty\binom{n+m}{n}\frac{1}{2^n} = 2^{2m+1}(-1)^mm!.
\end{eqnarray*}
Since $F^{(2m+1)}(0)= 0$ by \eqref{oddpw} and the fact that $H_{l}(0)=0$ for odd $l$, we get the following
MacLaurin series for $F$:
\begin{equation}\label{taylorseriesforh}
\sum_{m=0}^\infty\frac{(-1)^mm!}{(2m)!}(2x)^{2m}, \ \ x\in \R.
\end{equation}
This is exactly the MacLaurin series for the first derivative of the Dawson's integral, since
$$
{\rm daw}(x) = \sum_{m=0}^\infty \frac{(-1)^mm!2^{2m}}{(2m+1)!}x^{2m+1}
$$
(see e.g. \cite[p.\ 406]{Atlas}).

In order to prove that the series \eqref{taylorseriesforh} actually converges to $F$ (pointwise for
all $x\in \R$), we have to show
that for each $x\in \R$, the corresponding remainder $F^{(k)}(\xi_k)x^k/k!$,
where $\xi_k\in [-|x|,|x|]$, goes to zero as $k\goesto\infty$.
By applying the estimate $|H_l(y)|\leq Ke^{\frac 12 y^2} 2^{l/2}\sqrt{l!}$ to \eqref{evenpw} and \eqref{oddpw}, we get
\begin{eqnarray*}
2|F^{(2m)}(y)| &\leq& e^{\frac 12 y^2}2^mK\sum_{n=0}^\infty \frac{(n+m)!}{\sqrt{(2n)!}}
= Ke^{\frac 12 y^2}2^mm!\sum_{n=0}^\infty \binom{n+m}{n}\frac{n!}{\sqrt{(2n)!}};\\
2|F^{(2m+1)}(y)| &\leq& e^{\frac 12 y^2}\sqrt{2}2^mK\sum_{n=0}^\infty \frac{(n+m+1)!}{\sqrt{(2n+1)!}}
= \sqrt{2}Ke^{\frac 12 y^2}2^m(m+1)!\sum_{n=0}^\infty \binom{n+m+1}{n}\frac{n!}{\sqrt{(2n+1)!}}
\end{eqnarray*}
for all $y\in \R$. It is easy to see by induction that $n!/\sqrt{(2n)!}\leq n/2^{n-1}$ for all $n\geq 1$; using this, as
well as the relation $(m+1)2^{m+1}=\sum_{n=0}^\infty\binom{m+n}{n}\frac{n}{2^n}$ which is obtained by
differentiating \eqref{generalpower} and putting $x=\frac 12$, we conclude that
$2|F^{(2m)}(y)|\leq 4Ke^{\frac 12 y^2}2^{2m}(m+1)!$, and $2|F^{(2m+1)}(y)|\leq 4\sqrt{2}Ke^{\frac 12 y^2}2^{2m+1}(m+2)!$ for all $y\in \R$.
Consequently,
\begin{eqnarray*}
\frac{|F^{(2m)}(\xi)x^{2m}|}{(2m)!} \leq 2K e^{\frac 12 x^2}\frac{(2|x|)^{2m}(m+1)!}{(2m)!},\\
\frac{|F^{(2m+1)}(\xi)x^{2m+1}|}{(2m+1)!} \leq 2\sqrt{2}K e^{\frac 12 x^2}\frac{(2|x|)^{2m+1}(m+2)!}{(2m+1)!},
\end{eqnarray*}
whenever $x\in \R$, $\xi\in [-|x|,|x|]$ and $m\in \N$. It is easy to see that for each fixed $x\in \R$, the right hand sides of
the inequalities tend to zero as $m\goesto \infty$. Hence, we have shown that
$$
F(x)= \sum_{n=0}^\infty\frac{(-1)^mm!}{(2m)!}(2x)^{2m}= {\rm daw}^{(1)}(x), \ \ x\in \R.
$$
This completes the proof of Lemma \ref{dawson}.

\

\noindent {\bf Acknowledgment.} One of the authors (J. K.) was supported by Emil Aaltonen Foundation.

\end{document}